\documentclass[10pt, a4paper,aps,prl,showpacs,twocolumn,superscriptaddress,nofootinbib]{revtex4-2}   %APS document class style  (PRA,PRL,ETC)
   
\usepackage[utf8]{inputenc}  %Input what you want e.g., é, ł, a, ü
\usepackage[T1]{fontenc}     %Output what you want e.g., é, ł, a, ü
\usepackage[british]{babel}  %Do hyphenation according to british english
\usepackage{lmodern}  %Joe's font
\usepackage[scaled=1.03]{inconsolata} %Monospace font
\usepackage[usenames,dvipsnames]{color} %More color choices
\usepackage[colorlinks,citecolor=green,linkcolor=magenta,urlcolor=magenta]{hyperref}  %Hyperlinks (pink, green, blue)
\usepackage{graphicx} % Package to insert external figures
\usepackage{tikz}
\usepackage[babel]{microtype}  %Improves text justification
\usepackage{amsmath,amssymb,amsthm,bm,mathtools,amsfonts,mathrsfs,bbm,dsfont} %Useful math packages
\usepackage{xspace}  %Useful to add space in macros
\usepackage{multirow}
\usepackage{verbatim}
\usepackage[capitalize,nameinlink]{cleveref}
\usepackage{physics}
\newcommand{\id}{\ensuremath{\mathds{1}}}
\usepackage{bbold}
\usepackage{tcolorbox}

\usepackage[nolist,withpage]{acronym}
\makeatletter
\AtBeginDocument{%
  \renewcommand*{\AC@hyperlink}[2]{%
    \begingroup
      \hypersetup{hidelinks}%
      \hyperlink{#1}{#2}%
    \endgroup
  }%
}

\newtheoremstyle{mystyle}% name
  {6pt}%Space above
  {6pt}%Space below
  {\normalfont}%Body font
  {0pt}%Indent amount
  {\bf}% Theorem head font
  {.}%Punctuation after theorem head
  { }%Space after theorem head 2
  {}%Theorem head spec (can be left empty, meaning ‘normal’)

\theoremstyle{mystyle}
\newtheorem{theorem}{Theorem}

\newtheorem{lemma}{Lemma}

\usepackage{blindtext}

\usepackage{xcolor}
\colorlet{myPurple}{blue!40!red}
\colorlet{myCyan}{cyan!50!gray}

\definecolor{quantumviolet}{HTML}{53257F} %Quantum violet
\definecolor{quantumgray}{HTML}{555555} %Quantum gray
\definecolor{mygray}{gray}{0.95} %Quantum gray

\newtcolorbox[auto counter,number within=section]{boxfigure}[2][]{%
colback=mygray,colframe=myPurple,fonttitle=\bfseries,width=\columnwidth,float*=ht,lower separated=false, halign=justify,title=Box~\thetcbcounter: #2,#1}

\begin{document}
\title{Can every set of incompatible measurements lead to genuine multipartite steering?}

\author{Lucas E. A. Porto}
\email{lucas.porto@lip6.fr}
\affiliation{{Sorbonne Universit\'{e}, CNRS, LIP6, F-75005 Paris, France}}
\author{Lucas Tendick}
\affiliation{Inria, CPHT, LIX, CNRS, École polytechnique, Institut Polytechnique de Paris, Palaiseau, France}
\author{Daniel Cavalcanti}
\affiliation{Algorithmiq Ltd, Kanavakatu 3C 00160 Helsinki, Finland}
\author{Roope Uola}
\affiliation{Department of Physics and Astronomy, Uppsala University, Box 516, 751 20 Uppsala, Sweden}
\affiliation{Nordita, KTH Royal Institute of Technology and Stockholm University, Hannes Alfvéns väg 12, 10691 Stockholm, Sweden}
\author{Marco Túlio Quintino}
\affiliation{{Sorbonne Universit\'{e}, CNRS, LIP6, F-75005 Paris, France}}
\date{\today}  %Date today

\begin{abstract}
Measurement incompatibility and bipartite quantum steering are known to display a strong connection: a set of measurements is incompatible if and only if it can lead to bipartite steering.
Despite such a close link between these concepts in bipartite scenarios, little is known in the multipartite setting, where notions of \textit{genuine} multipartite correlations play major roles.
In this work we prove that, as in the bipartite case, incompatibility is also necessary and sufficient for genuine multipartite steering in any multipartite scenario with a single uncharacterised party.
Interestingly, genuine multipartite steering can be extracted from any set of incompatible measurements using states which are not SLOCC equivalent, such as GHZ and W states.
In contrast, we prove that this result does not hold in scenarios with more than one uncharacterised party, by presenting a set of incompatible measurements that can never lead to genuine multipartite steering in these cases. In order to obtain our main results, we introduce methods tailored for multipartite correlations, paving the way to understanding the role of measurement incompatibility beyond bipartite scenarios. 
\end{abstract}
\maketitle

\begin{figure}
    \centering
    \includegraphics[width=1\linewidth]{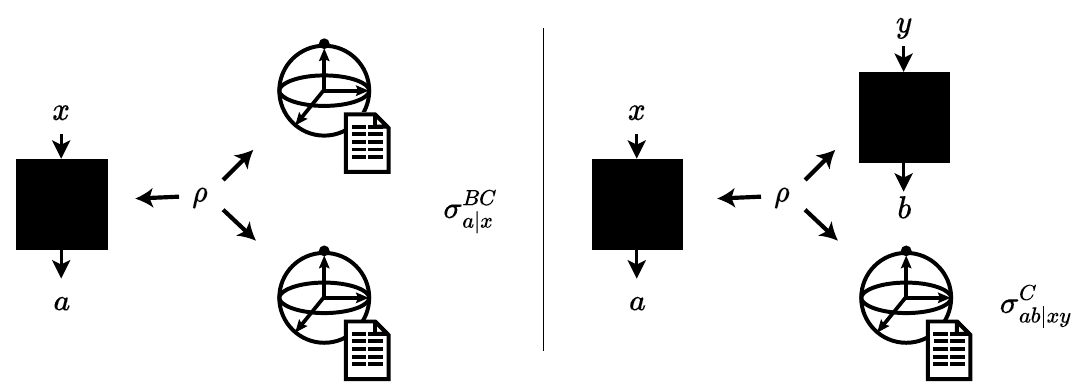}
    \caption{Scheme of tripartite steering scenarios where black boxes represent uncharacterised measurements. 
    In scenarios with a single uncharacterised party (left), we show that any set of incompatible measurements enables genuine multipartite steering. 
    In contrast, in scenarios where more than one party performs uncharacterised measurements (right), we provide sets of incompatible measurements which never lead to genuine multipartite steering.}
    \label{fig:placeholder}
\end{figure}

\indent \textit{Introduction.}\textemdash 
The fact that local measurements performed on distant quantum systems can lead to correlations that resist any classical explanation constitutes one of the most profound insights of quantum theory. Bell's 1964 seminal work~\cite{Bell1964seminal} can be understood as an answer to the Gedankenexperiment proposed by Einstein, Podolsky and Rosen (EPR)~\cite{EPR1935paper}, in which the notions of entanglement~\cite{Horodecki2009Entanglement} and measurement incompatibility ~\cite{Guehne2023IncompReview, Heinosaari2016Invitation} (i.e., the impossibility of measuring certain observable quantities simultaneously) are used to argue against the completeness of quantum theory.
Bell's work formalises the classical reasoning of EPR in a theory-independent way, and points to an actual experiment that can decide whether nature is in agreement with local hidden variable theories.

The fact that Bell's approach is agnostic to the underlying theory governing the studied systems is arguably its most disruptive feature.
It enables rather deep insights about the foundations of nature itself~\cite{Brunner2014BellReview,Goldstein2011Bell} and, furthermore, it opens possibilities for fundamentally novel kinds of information processing protocols, with minimal assumptions on the inner workings of devices~\cite{Ekert1991QKD,Colbeck2009PhDRandom,Colbeck2011Random, Pironio2010Random,Acin2007QKD,Arslan2025DIQKD,Supic2020selftesting}.
It allows, for example, to certify the entanglement of quantum states solely by analysing the statistics produced by performing uncharacterised measurements on them~\cite{Moroder2013device,Guhne2009EntanglementDetection}.

The advantages of such an approach, however, come with some costs, which are mainly associated with the difficulty of achieving Bell nonlocal correlations within quantum theory.
The main resources necessary for doing so are entanglement and measurement incompatibility, although they are known not to be sufficient. 
There are entangled states which can never violate a Bell inequality \cite{Werner1989LHV, Barett2002LHV} and, similarly, there are incompatible measurements that are unable to produce Bell nonlocal correlations~\cite{Quintino2016incompatible,Hirch2017IncompBell, Bene2018IncompBell}.
From this perspective, it is natural to investigate intermediate scenarios, where some partial assumptions are made about the underlying physics of the experiment.
This is one of the motivations for the study of the so-called quantum steering~\cite{Wiseman2007Steering,Cavalcanti2017Review,Uola2020ReviewSteering}, which takes place in scenarios where (at least) one of the parties has a fully characterised quantum system, while the others may have untrusted devices.

As in the case of Bell nonlocality, entanglement and measurement incompatibility are the key resources for producing steering, although the requirements are somewhat weaker. Entanglement is not sufficient for steering, and steering is not sufficient for Bell nonlocality~\cite{Werner1989LHV, Wiseman2007Steering,Qunitino2015Inequivalence}.
However,  measurement incompatibility is not only a necessary condition for steering, but also a sufficient one: any set of incompatible measurements can be used to demonstrate steering \cite{Uola2014JM, Quintino2014JM,Uola2015Mapping}.
So far, this close relation between incompatibility and steering has been exclusively investigated in steering scenarios with only two parties.
In multipartite scenarios, the connection between these two concepts remains largely unexplored, especially when the notion of \textit{genuine} multipartite steering is considered \cite{Cavalcanti2015GMS, Cavalcanti2017Review}.

In this paper, we prove that in multipartite scenarios with a single uncharacterised party, every set of incompatible measurements can lead to genuine multipartite steering. Moreover, 
there are states which are not SLOCC equivalent (i.e., that cannot be converted into one another using stochastic local operations and classical communication~\cite{Dur2000Wstates}), such as GHZ and W states, that  extract genuine multipartite steering from \textit{any} set of incompatible measurements.
This contrasts with the bipartite case, where only pure states with full Schmidt rank\footnote{In the bipartite case, pure states are SLOCC equivalent if and only if they have the same Schmidt rank ~\cite{Horodecki2009Entanglement}.} can demonstrate steering of any set of incompatible measurements~\cite{Quintino2014JM, Uola2014JM, Uola2015Mapping}. 

We then show that incompatibility is insufficient for genuine multipartite steering in scenarios with two or more uncharacterised parties. 
This is shown by identifying a necessary and sufficient condition for sets of measurements to enable genuine multipartite steering in these scenarios, and by proving that this condition is not implied by the incompatibility of the individual sets of measurements.
This condition suggests a generalised form of measurement incompatibility, applicable to local measurements on multipartite systems and which might be of independent interest, beyond the scope of this work.

\textit{Bipartite steering and measurement incompatibility.}\textemdash Let us start by reviewing the scenarios and concepts that motivate our discussion.
Consider a bipartite quantum state $\rho^{AB}$, shared by two spatially separated observers, Alice and Bob.
Let $ \{A_{a \vert x}\}_{a, x}$ be a set of positive operator-valued measure (POVM) measurements that Alice can perform on her share of the system, where $A_{a \vert x}$ is the POVM element associated with the outcome $a$ of the measurement setting $x$.
Then, when Alice measures $x$ and obtains the outcome $a$, Bob's unnormalised state can be described by 
\begin{equation}\label{eq:bipartite_assemblage}
    \sigma_{a \vert x}^B = \Tr_A\left(A_{a \vert x} \otimes \id \rho^{AB}\right).
\end{equation}
The set of such states $\{\sigma^B_{a \vert x}\}_{a, x}$ is called a \textit{steering assemblage}.

We say that a steering assemblage is \textit{unsteerable} when it does not exhibit steering, that is, when it can be decomposed as
\begin{equation}\label{eq:unsteerable_assemblage}
    \sigma^B_{a \vert x} = \sum_\lambda p(\lambda) p(a \vert x, \lambda) \rho^B_\lambda,
\end{equation}
where $p(\lambda)$ and $p(a \vert x, \lambda)$ are probabilities, and  $\rho^B_\lambda$ are quantum states. 
An unsteerable assemblage can be understood in the following way: Bob holds a quantum state $ \rho^B_{\lambda} $ with probability $p(\lambda)$ and upon learning that Alice measured $x$ and obtained outcome $a$, Bob's knowledge is updated according to $p(a \vert x, \lambda)$ \cite{Uola2020ReviewSteering}. 

Notice that if Alice and Bob share a separable state $ \rho^{AB} = \sum_{\lambda} p({\lambda}) \rho^A_{\lambda} \otimes \rho^B_{\lambda} $, Bob's steering assemblage can always be decomposed as in \cref{eq:unsteerable_assemblage}, i.e., it does not exhibit steering.
Therefore, if Bob holds an assemblage that demonstrates steering, he knows that the state shared with Alice must have been entangled.

Importantly, this conclusion does not require the precise knowledge of the measurements performed by Alice.
In this sense, steering allows for the certification of entanglement in scenarios where the measurements performed by one party are not characterised, that is, in a \textit{one-sided} device-independent manner.
However, not all entangled states are able to produce steering in this standard scenario \cite{Werner1989LHV, Wiseman2007Steering,Qunitino2015Inequivalence}.

Another necessary ingredient for steering is the \textit{incompatibility} of the measurements performed by Alice.
A set of measurements $ \{A_{a \vert x}\}_{a, x}$ is \textit{jointly measurable}, or \textit{compatible}, if there exists a single measurement $ \lbrace G_{\lambda} \rbrace_{\lambda} $ out of which the statistics of each measurement $x$ can be recovered via classical post-processing, i.e.,
\begin{equation}\label{Eq:DefJM}
    A_{a \vert x} = \sum_{\lambda} p(a \vert x, \lambda) G_{\lambda}.
\end{equation}
Otherwise, the measurements $\{A_{a \vert x}\}_{a,x}$ are said to be \textit{incompatible}.
If the measurements performed by Alice in a steering test are compatible, the assemblage given by \cref{eq:bipartite_assemblage} is necessarily unsteerable.

Conversely, and unlike entanglement, incompatibility is also sufficient for steering.
That is, every set of incompatible measurements leads to quantum steering when applied to an appropriate entangled state \cite{Uola2014JM, Quintino2014JM}.
In particular, the $d-$dimensional maximally entangled state $\ket{\Phi^{+}} := \tfrac{1}{\sqrt{d}} \sum_{i=0}^{d-1} \lvert ii \rangle $ can reveal the incompatibility of any set of incompatible measurements, as it holds that 
\begin{align}
\sigma_{a \vert x}^B = \mathrm{Tr}_A \Big[ (A_{a \vert x} \otimes \mathds{1}^B)  \lvert \Phi^{+} \rangle \langle \Phi^{+} \rvert \Big] = \dfrac{A^{T}_{a \vert x}}{d},  \label{Eq: MaxEntangledAssemblage}  
\end{align}
where the transpose is taken with respect to the Schmidt basis of $ \ket{\Phi^{+}}$. 
The assemblage in \cref{Eq: MaxEntangledAssemblage} is steerable if and only if the measurement assemblage $\{A_{a \vert x}\}_{a,x}$ is incompatible. This establishes a close connection between quantum steering and measurement incompatibility in bipartite scenarios.
The goal of this work is to investigate whether such a strong relation between these two concepts still holds in multipartite steering scenarios.

\textit{Genuine multipartite steering.}\textemdash As we mentioned above, bipartite steering can be understood as a one-sided device-independent certification of entanglement.
In a similar fashion, multipartite steering is also defined aiming at the certification of multipartite entanglement.
However, in multipartite scenarios more detailed entanglement structures arise \cite{Horodecki2009Entanglement}, which subsequently lead to richer multipartite steering structures \cite{Cavalcanti2017Review}.

The notion of multipartite entanglement that captures the idea of entanglement distributed amongst all parties is the so-called \textit{genuine multipartite entanglement} (GME) \cite{Guhne2009EntanglementDetection}.
A state is said to exhibit genuine multipartite entanglement when it is not \textit{bi-separable}, i.e., when it cannot be decomposed as a convex combination of states which are separable according to bipartitions.
For example, in the tripartite case a bi-separable state $\rho_{\text{Bi-Sep}}^{ABC}$ can always be decomposed as
\begin{align}\label{eq:bi-sep_states}
    \rho_{\text{Bi-Sep}}^{ABC} &= \sum_\lambda p(\lambda) \rho^A_\lambda \otimes \rho^{BC}_\lambda \nonumber\ + \sum_\mu p(\mu) \rho^{B}_\mu \otimes \rho^{AC}_\mu \\
    &+ \sum_\nu p(\nu) \rho^C_\nu \otimes \rho^{AB}_\nu,
\end{align}
with $\sum_\lambda p(\lambda) + \sum_\mu p(\mu) + \sum_\nu p(\nu) = 1$.
The definition of multipartite steering we consider in this work will be thus tailored to certify GME, and for that reason it is termed \textit{genuine multipartite steering} (GMS) \cite{Cavalcanti2017Review, Cavalcanti2015GMS} (see \cite{He2013GMS} for a non-equivalent definition of GMS that captures other aspects of multipartite correlations).

Another important detail to properly define multipartite steering scenarios is that one needs to specify not only the number of parties, but also which parties should be treated as uncharacterised.
For instance, in a tripartite scenario we can consider both the case where a single party performs uncharacterised measurements and steers the remaining two, or the case where two parties perform uncharacterised measurements and steer the remaining one. 

In the former case, of \textit{one-sided} device-independent scenarios, and assuming that Alice is steering Bob and Charlie by performing measurements $\{A_{a \vert x}\}_{a, x}$ on her share of a tripartite state $\rho^{ABC}$, the steering assemblage $\{\sigma_{a \vert x}^{BC}\}_{a, x}$ is given by
\begin{equation}\label{eq:assemblage_one-sided}
    \sigma^{BC}_{a \vert x} = \Tr_A\Big[(A_{a \vert x} \otimes \id^{BC}) \rho^{ABC}\Big].
\end{equation}
If $\rho^{ABC}$ is bi-separable, such an assemblage can always be decomposed as
\begin{align}\label{eq:tripartite_one-sided_bisep_assemblage}
\sigma_{a|x}^{BC}= &\sum_\lambda p(\lambda) p(a|x\lambda)\rho_\lambda^{BC}\nonumber +\sum_\mu p(\mu) \sigma_{a|x\mu}^B\otimes\rho_\mu^C\\
    +&\sum_\nu p(\nu)\rho_\nu^B\otimes\sigma_{a|x\nu}^C,
\end{align}
where $\sum_a \sigma^B_{a \vert x \mu} = \rho^B_\mu$, $\sum_a \sigma_{a \vert x \nu}^C = \rho_\nu^C$ and $\sum_\lambda p(\lambda) + \sum_\mu p(\mu) + \sum_\nu p(\nu) = 1$.
On the right-hand side of \cref{eq:tripartite_one-sided_bisep_assemblage}, the first term corresponds to the situation where Alice cannot steer the system $BC$, but Bob and Charlie can share arbitrary entanglement between each other.
The second (third) term is separable and represents the case where Alice can only steer Bob (Charlie) \cite{Cavalcanti2015GMS, Cavalcanti2017Review}.

In the spirit of defining genuine multipartite steering as a way to certify genuine multipartite entanglement, we say that an assemblage $\{\sigma_{a \vert x}^{BC}\}_{a, x}$ exhibits GMS if it cannot be decomposed as \cref{eq:tripartite_one-sided_bisep_assemblage}.
In this sense, for an assemblage to demonstrate GMS, it is necessary that the underlying state $\rho^{ABC}$ exhibits GME.
However, it is also known that there exist genuine multipartite entangled states that cannot generate any form of multipartite steering~\cite{Bowles2016GME}.

In a similar fashion, in a \textit{two-sided} device-independent scenario, where two parties, say Alice and Bob, steer a third party, Charlie, by performing local measurements $\{A_{a \vert x}\}_{a, x}$ and $\{B_{a \vert x}\}_{b, y}$ on their respective shares of a tripartite state $\rho^{ABC}$, the steering assemblage $ \{ \sigma_{ab \vert xy}^{C}\}_{a,b,x,y}$ has the form
\begin{equation}\label{eq:assemblage_two-sided}
    \sigma^{C}_{ab \vert xy} = \Tr_{AB}\Big[(A_{a \vert x} \otimes B_{b \vert y} \otimes \id^{C}) \rho^{ABC}\Big].
\end{equation}
In this case, assemblages generated by bi-separable states 
can be decomposed as
\begin{align}\label{eq:two-sided_bisep_assemblage}
    \sigma_{ab|xy}^{C}= &\sum_\lambda p(\lambda) p(a|x\lambda)\sigma_{b \vert y \lambda}^{C}\nonumber
    +\sum_\mu p(\mu) p(b|y\mu)\sigma_{a \vert x \mu}^{C}\\
    +&\sum_\nu p(\nu)p(ab|xy\nu)\rho_\nu^C,
\end{align}
where $\sum_b \sigma_{b \vert y \lambda}^C = \rho^C_\lambda$, $\sum_a \sigma_{a \vert x \mu}^C = \rho^C_\mu$, $p(ab|xy\nu)$ is an arbitrary quantum correlation, and $\sum_\lambda p(\lambda) + \sum_\mu p(\mu) + \sum_\nu p(\nu) = 1$.
In \cref{eq:two-sided_bisep_assemblage}, the first (second) term on the right-hand side corresponds to the case where only Alice (Bob) can steer Charlie, and the last term to the case where Alice and Bob cannot jointly steer Charlie.
Notice that in this last term, Alice and Bob may share arbitrary quantum nonlocal correlations \cite{Cavalcanti2015GMS, Cavalcanti2017Review}.

If an assemblage $\{ \sigma_{ab \vert xy}^{C}\}_{a,b,x,y}$ does not admit a decomposition of the form \cref{eq:two-sided_bisep_assemblage}, it exhibits GMS.
Also in this case, GME is a necessary resource for GMS, but there are GME states that are not able to produce GMS in these scenarios \cite{Bowles2016GME}.

Besides GME, for assemblages of the form \cref{eq:assemblage_one-sided} and \cref{eq:assemblage_two-sided} to demonstrate genuine multipartite steering, it is necessary that the measurement sets $\{A_{a \vert x}\}_{a, x}$ and $\{B_{b \vert y}\}_{b, y}$ be incompatible.
In the bipartite case, the corresponding condition of measurement incompatibility turns out to be sufficient for steering, as we have previously discussed \cite{Uola2014JM, Quintino2014JM}.
In this work, we investigate whether this remains valid in the multipartite case, i.e., whether for any set of incompatible measurements $\{A_{a \vert x}\}_{a, x}$ and $\{B_{b \vert y}\}_{b, y}$, we can find states $\rho^{ABC}$ for which the assemblages \cref{eq:assemblage_one-sided} and \cref{eq:assemblage_two-sided} demonstrate GMS.

Although above we only provided the precise definitions for tripartite scenarios, in the following we state our results for scenarios with an arbitrary number of parties.
For the associated definitions, as well as the corresponding proofs, we refer to the End Matter.

\textit{One-sided device-independent scenarios.}\textemdash 
In multipartite scenarios with one untrusted party, we show that, as it is the case for bipartite scenarios, incompatibility is sufficient for genuine multipartite steering.
That is, any set of incompatible measurements is able to generate assemblages that demonstrate genuine multipartite steering.
\begin{theorem}\label{thm:one-sided_incomp_implies_gms}
    Let $\ket{\psi^{(1 \ldots n)}} \in \mathbb{C}^{d_1} \otimes \mathbb{C}^{d^{\otimes (n-1)}}$ be an $n$-partite state whose Schmidt decomposition in the bipartition $1|2\ldots n$ has the form $\ket{\psi} = \sum_{i = 0}^{d_1 - 1}\sqrt{p_i} \ket{\phi_i^{(1)}} \otimes \ket{\zeta_i^{(2\ldots n)}}$, where $p_i > 0$ and $\ket{\zeta_i}^{2 \ldots n}$ is invariant under any permutation of the parties $2 \ldots n$, for all $i \in \{0, \ldots, d_1 -1\}$.
    
    A set of  $d_1$-dimensional measurements $ \{A_{a \vert x}\}_{a,x}$ is incompatible if and only if the assemblage $\sigma_{a|x}^{2 \ldots n} = \Tr_1(A_{a|x}\otimes \id^{2 \ldots n} \ketbra{\psi^{(1 \ldots n)}})$ exhibits genuine multipartite steering.
\end{theorem}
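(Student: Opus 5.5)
The easy direction is immediate. If $\{A_{a|x}\}$ is jointly measurable, $A_{a|x}=\sum_\lambda p(a|x,\lambda)G_\lambda$, then
\[
\sigma_{a|x}^{2\ldots n}=\sum_\lambda p(a|x,\lambda)\,\Tr_1\big[(G_\lambda\otimes\id)\ketbra{\psi}\big].
\]
This is a term of the first type in the bi-separable decomposition (the $n$-partite analogue of \cref{eq:tripartite_one-sided_bisep_assemblage}), so there is no GMS. For the converse I argue by contraposition: assume the assemblage admits a bi-separable decomposition and show that $\{A_{a|x}\}$ is jointly measurable.

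First I reduce to the maximally entangled case. Let $\Pi_S=\sum_i\ketbra{\zeta_i}$ be the projector onto the span of the Schmidt vectors. Then
\[
\sigma_{a|x}=\sum_{ij}\sqrt{p_ip_j}\,\bra{\phi_j}A_{a|x}\ket{\phi_i}\,\ket{\zeta_i}\!\bra{\zeta_j}.
\]
Define the map $\mathcal{E}(X)=D^{-1/2}V^\dagger X V D^{-1/2}$, where $V=\sum_i\ket{\zeta_i}\!\bra{i}$ and $D=\mathrm{diag}(p_i)$. It is completely positive, and it sends $\sigma_{a|x}$ to $A_{a|x}^T$ in the Schmidt basis of party 1. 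Take any decomposition term $q(\lambda)\,\sigma^{K}_{a|x\lambda}\otimes\rho^{\bar K}_\lambda$ with $1\in K$. Its image under $\mathcal{E}$ is a positive operator $M_{a|x}^{(\lambda)}$ with $\sum_a M^{(\lambda)}_{a|x}$ independent of $x$. If each such family were jointly measurable, we could sum the parent POVMs over $\lambda$ (together with the trivially compatible first-type term) and obtain $A^T_{a|x}=\sum_\lambda(\cdots)$ as a post-processing of a single POVM summing to $\id$. Incompatibility of $A^T$ is equivalent to incompatibility of $A$.

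The crux is therefore to show that, for a term where party 1 steers only a proper subset $K\setminus\{1\}$ of $\{2,\dots,n\}$ while the complement $\bar K$ is in a product with the steered part, the image under $\mathcal{E}$ is a compatible family. Permutation invariance enters here. Every $\sigma_{a|x}$ is supported on $\Pi_S$, and so is every term in the decomposition, since the terms are positive and sum to something supported there. This forces each separable term to live inside $\Pi_S(\mathcal{H}_{K'}\otimes\mathcal{H}_{\bar K})\Pi_S$, where $K'=K\setminus\{1\}$.

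I would then symmetrize over permutations of parties $2\ldots n$. Because $\Pi_S$ is permutation invariant, I can apply the twirl $\frac{1}{(n-1)!}\sum_\pi U_\pi(\cdot)U_\pi^\dagger$ to each term without changing $\sigma_{a|x}$. The key step, which I expect to be the main obstacle, is to show that an operator of the form $\sigma_{a|x}^{K'}\otimes\rho^{\bar K}$ whose support lies in the symmetric-type subspace spanned by the $\ket{\zeta_i}$ must have $\sigma_{a|x}^{K'}$ depending on $a,x$ only through a scalar times a fixed operator.

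I would first try to prove this support argument directly. A product vector $\ket{u}_{K'}\ket{v}_{\bar K}$ lies in a permutation-invariant subspace only if it is invariant under swaps exchanging a party in $K'$ with one in $\bar K$. Together with the product structure, this should force every vector in the range of the term to be a fixed product state $\ket{w}^{\otimes(n-1)}$-like vector. Consequently the $x$-dependence collapses to weights, $\sigma_{a|x}^{K'}\otimes\rho^{\bar K}=p(a|x,\lambda)\,\tau_\lambda$, which is an unsteerable term and hence jointly measurable after applying $\mathcal{E}$. If pure-range reasoning fails for mixed terms, I would fall back on the SDP/duality route. That means taking an incompatibility witness $\{F_{a|x}\}$ for $A^T$ and lifting it to a GMS witness $V(\cdot)V^\dagger$ on parties $2\ldots n$. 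I would then bound its value on bi-separable assemblages using the same symmetry argument at the level of witness operators.
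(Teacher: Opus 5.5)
Your proposal is correct and is essentially the paper's argument. Your support observation is the content of \cref{lem:psd_orthogonality_decomposition,lem:flip_invariance}: each term of the bi-separable decomposition inherits the permutation invariance of $\ket{\psi}$, which makes the twirl superfluous. Your key step is a vector-level version of \cref{lem:product_implies_full_product}: a permutation-invariant operator that factorises across a cut is a full product, so the $a,x$-dependence collapses to weights. Your map $\mathcal{E}$ is the positive left inverse of \cref{lem:pos_left_inverse} up to a transpose.

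The range argument you sketch already covers mixed terms, because $\mathrm{range}(\sigma\otimes\rho)=\mathrm{range}(\sigma)\otimes\mathrm{range}(\rho)$. A short linear-independence check then forces both factors to be rank one and built from the same $\ket{w}$, which is fixed by $\rho_\lambda$ alone. So the SDP fallback is not needed.
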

\begin{proof}
    Since compatible measurements are not able to produce steering, we just need to show that the incompatibility of $\{A_{a \vert x}\}_{a,x}$ is also sufficient for $\{\sigma_{a|x}^{BC}\}_{a, x}$ to exhibit GMS. 
    We do so by proving its contrapositive, if $\{\sigma_{a|x}^{BC}\}_{a, x}$ is not GMS, then $\{A_{a \vert x}\}_{a,x}$ is jointly measurable.  
    
    For the tripartite case, this can be done as follows.
    Assume that the assemblage $\sigma_{a|x}^{BC} = \Tr_A(A_{a|x}\otimes \id^{BC} \ketbra{\psi^{ABC}})$ admits a decomposition as in \cref{eq:tripartite_one-sided_bisep_assemblage}.
    Regardless of the measurement elements $A_{a \vert x}$, since the state $\ket{\psi^{ABC}}$ is invariant under the permutation of $B$ and $C$, it follows that every operator on the right-hand side of \cref{eq:tripartite_one-sided_bisep_assemblage} must also be invariant under such permutation (see \cref{lem:psd_orthogonality_decomposition} and \cref{lem:flip_invariance} in the End Matter). 
    This implies that the terms of the form $\sigma_{a|x\mu}^B\otimes\rho_\mu^C$ in \cref{eq:tripartite_one-sided_bisep_assemblage} are such that $\sigma_{a|x\mu}^B = \Tr(\sigma_{a|x\mu}^B) \rho_\mu^C$, and similarly for the terms $\rho_\nu^B\otimes\sigma_{a|x\nu}^C$.
    Therefore, the assemblage $\{\sigma_{a|x}^{BC}\}_{a, x}$ does not demonstrate steering at all, i.e., $\sigma_{a \vert x}^{BC} = \sum_{\lambda'} p(\lambda') p(a|x\lambda')\rho_{\lambda'}^{BC}$.
    
    Now, since the Schmidt rank of $\ket{\psi^{ABC}}$ in the bipartition $A|BC$ is $d_1$, we can use this decomposition of the assemblage to construct a parent measurement for $\{A_{a \vert x}\}_{a,x}$ (see \cref{lem:pos_left_inverse} in the End Matter).
    This proves that if the set of measurements $\{A_{a \vert x}\}_{a,x}$ is incompatible, then the assemblage $\Tr_A(A_{a|x}\otimes \id^{BC} \ketbra{\psi^{ABC}})$ must exhibit genuine multipartite steering.
    A proof for the general $n$-partite case is discussed in the End Matter.
\end{proof}
\cref{thm:one-sided_incomp_implies_gms} provides a sufficient condition for identifying pure states $\ket{\psi^{(1 \ldots n)}}$ that demonstrate GMS when subjected to any set of incompatible measurements.
This condition encompasses some of the most notorious GME states, such as the GHZ state $\ket{\textup{GHZ}}:=\frac{1}{\sqrt{2}}(\ket{0}^{\otimes n} + \ket{1}^{\otimes n} )$ and some Dicke states~\cite{Gulati2026Dicke, Marconi2026Symm}, including the $n$-qubit W state, which for $n=3$ reads $\ket{W}:=\frac{1}{\sqrt{3}}(\ket{001} + \ket{010} + \ket{100})$.
We then see that sets of states which are not SLOCC equivalent~\cite{Dur2000Wstates,Horodecki2009Entanglement} are equally useful to extract genuine multipartite steering out of incompatible measurements, in contrast with the bipartite case, where only pure full Schmidt rank can demonstrate steering of arbitrary incompatible measurements~\cite{Quintino2014JM, Uola2014JM, Uola2015Mapping}. 

\textit{Multiple-sided device-independent scenarios.}\textemdash 
In contrast with the one-sided device-independent case, in the following we show that incompatibility by itself is not sufficient for genuine multipartite steering in scenarios with more than one untrusted party.
More precisely, in the tripartite case this means that there are sets of measurements $ \lbrace A_{a \vert x}\rbrace_{a,x} $ and $ \lbrace B_{b \vert y}\rbrace_{b,y} $ which are both incompatible, but when applied on any $\rho^{ABC}$, the assemblage given by \cref{eq:assemblage_two-sided} does not exhibit genuine multipartite steering.

\begin{theorem}\label{thm:two_sided_incomp_but_no_gms}
    Let $M_{\pm \vert x} = \frac{1}{2}\left(\id \pm \frac{1}{\sqrt[4]{2}}\sigma_x\right)$ for $x \in \{1, 2\}$ where $\sigma_1$, $\sigma_2$ are the Pauli matrices $X$ and $Y$.
    For any $n$-partite state $\rho^{1 \ldots n}$ and any $k > 1$, the assemblage $\sigma^{k+1 \ldots n}_{a_1 \ldots a_k \vert x_1 \ldots x_k} = \Tr_{1 \ldots k}(M_{a_1 \vert x_1} \otimes \ldots \otimes M_{a_k \vert x_k} \otimes \id^{k+1 \ldots n} \rho^{1 \ldots n})$ does not exhibit genuine multipartite steering.
    Meanwhile, the set of measurements $\{M_{a \vert x}\}_{a, x}$ is known to be incompatible \cite{Busch1986JM, Guehne2023IncompReview}.
\end{theorem}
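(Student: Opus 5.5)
The plan is to reduce the theorem to a statement about the measurements alone. Once that statement is established, it holds for every state $\rho^{1\ldots n}$ by linearity. Throughout, write $\eta = 2^{-1/4}$, so $\eta^2 = 1/\sqrt{2}$. Recall that the noisy Pauli pair $M^{t}_{\pm|x} = \tfrac12(\id \pm t\,\sigma_x)$, $x\in\{1,2\}$, is jointly measurable exactly when $t \le 1/\sqrt2$. A parent measurement for it is $G_{\alpha\beta} = \tfrac14\big(\id + \tfrac{1}{\sqrt2}(\alpha X + \beta Y)\big)$, with deterministic post-processing $\alpha$ or $\beta$. The threshold $\eta^2 = 1/\sqrt2$ is what makes this work: each party's measurement is individually incompatible, but the correlated noise seen by a pair is just at the joint-measurability threshold.

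\textbf{Step 1 (reduction to two parties).} I would first show it suffices to treat $k=2$, with parties $1,2$ untrusted and everything else trusted. The key intermediate claim concerns the two-qubit product set $\{M_{a_1|x_1}\otimes M_{a_2|x_2}\}$. I would aim to write it as
\[
M_{a_1|x_1}\otimes M_{a_2|x_2} = \sum_\lambda q_\lambda\, p(a_1|x_1\lambda)\, G^{(1)}_\lambda\otimes N^{\lambda}_{a_2|x_2} + \sum_\mu r_\mu\, p(a_2|x_2\mu)\, N^{\mu}_{a_1|x_1}\otimes G^{(2)}_\mu ,
\]
where $N^\lambda$ and $N^\mu$ are arbitrary (possibly $\lambda$-dependent) valid qubit measurements. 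In each branch, one of the two untrusted parties effectively performs a single measurement followed by classical post-processing.

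Inserting this decomposition into the assemblage definition for a general $\rho^{1\ldots n}$ should give the following. In a $\lambda$-branch, party $1$'s outcome is generated by a local hidden variable. The remaining object is then an assemblage on which party $1$ has no steering influence, so the party-$1$ versus rest cut is "unsteered". This is precisely one of the admissible bi-separable terms in the $n$-partite analogue of \cref{eq:two-sided_bisep_assemblage}; the other untrusted parties $3,\dots,k$ are absorbed into the "rest". The $\mu$-branch is the same with parties $1$ and $2$ swapped. Hence no assemblage built from these measurements, for any $k>1$, can be GMS. I would check that the term in which $\lambda$ enters only through $p(a_1|x_1\lambda)$ and $\Tr_1[(G^{(1)}_\lambda\otimes\cdots)\rho]$ indeed has the form "party $1$ cannot steer", matching the general definitions in the End Matter.

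\textbf{Step 2 (the operator decomposition — main obstacle).} The decomposition in Step 1 is the step I expect to be hardest. The naive choice fails. Taking branch measurements $(M^{1/\sqrt2}, M^{1})$ and $(M^{1}, M^{1/\sqrt2})$ with weight $\tfrac12$ each reproduces the correlation term $ab\,\sigma_x\otimes\sigma_y$ with the right coefficient $1/\sqrt2$. However, it gives local-marginal coefficients $\tfrac12(1+1/\sqrt2)\approx0.854 > \eta\approx0.841$. So one must exploit the freedom that $N^\lambda$ may depend on $\lambda$.

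Concretely, I would first choose $G^{(1)}_\lambda = G_{\alpha\beta}$, so $\lambda=(\alpha,\beta)$. I would then look for $N^{\alpha\beta}_{b|y} = \tfrac12\big(\id + b\,\vec n_{y}^{\alpha\beta}\cdot\vec\sigma\big)$ whose Bloch vectors are tilted according to $(\alpha,\beta)$. The point is that conditioning on $\lambda$ lets party $2$'s apparent bias be correlated with party $1$'s hidden variable. This should produce the $ab$ term at strength $1/\sqrt2$ while keeping each marginal at strength $\eta$. Matching the Pauli coefficients of $\id\otimes\id$, $\sigma_x\otimes\id$, $\id\otimes\sigma_y$ and $\sigma_x\otimes\sigma_y$ yields a small system of linear constraints with norm constraints $|\vec n|\le1$. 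The symmetric ansatz splits weight equally between the two branches. If an explicit solution does not drop out, I would set up the existence question as a small SDP/feasibility problem over the Pauli coefficients and then extract an analytic certificate. Incompatibility of $\{M_{a|x}\}$ itself is quoted from \cite{Busch1986JM}, since $\eta > 1/\sqrt2$.
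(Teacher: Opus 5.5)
\textbf{Step 1 is sound.} The product form you aim for is a special case of the paper's sufficient condition \cref{eq:bi-partite_jm_no_F}: take $G_{\lambda,b|y}=q_\lambda G^{(1)}_\lambda\otimes N^\lambda_{b|y}$. For $k>2$, your $\lambda$-branch is directly a term of the second type in \cref{eq:general_bisep_assemblage} with $L_1=\{1\}$. That is a valid alternative to the paper's merge-then-measure argument.

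\textbf{The gap is Step 2}, which is the entire content of the theorem. No decomposition is given, and the ansatz you commit to provably cannot give one. Summing your decomposition over $a,b$ forces $\sum_\lambda q_\lambda G^{(1)}_\lambda=q\,\id$ for some $q\in[0,1]$. Take $G^{(1)}_\lambda$ to be the four-outcome parent $G_{\alpha\beta}$ with deterministic read-out $a=a_x(\lambda)$. Then the $\lambda$-branch contributes $\sum_\lambda \tfrac{q_\lambda}{4\sqrt2}\cdot\tfrac12\mathrm{Tr}[\sigma_y(N^\lambda_{+|y}-N^\lambda_{-|y})]\le q/\sqrt2$ to the $\sigma_x\otimes\sigma_y$ coefficient of $\sum_{a,b}ab\,M_{a|x}\otimes M_{b|y}$. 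Equality holds only if $N^\lambda_{\pm|y}=\tfrac12(\id\pm\sigma_y)$ for every $\lambda$ in the support. The $\mu$-branch likewise contributes at most $(1-q)/\sqrt2$. The target coefficient is exactly $\eta^2=1/\sqrt2$, so both branches must saturate. This forces all $N$'s to be sharp and $\lambda$-independent, which is your naive choice. Its marginal strengths $q/\sqrt2+(1-q)$ and $q+(1-q)/\sqrt2$ sum to $1+1/\sqrt2>2^{3/4}=2\eta$ for every $q$. A $\lambda$-dependent tilt of the Bloch vectors can only lower the correlation, while party 1's marginal in its own branch stays pinned at $q/\sqrt2$.

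\textbf{What is missing is one extra bit of shared randomness.} Refine the hidden variable to $(\lambda,\kappa)$, with $\kappa=\pm1$ drawn independently of the outcome of $G_{\alpha\beta}$. Party 1 outputs $\kappa\,a_x(\lambda)$ and party 2 measures $\kappa\sigma_y$, and symmetrically in the other branch, with $q=1/2$. Since $\kappa^2=1$, the $ab$ term stays at $1/\sqrt2$, while both marginals are multiplied by $\mathbb{E}[\kappa]$. Choosing $\mathbb{E}[\kappa]=2^{5/4}(\sqrt2-1)$ brings $\tfrac12(1+1/\sqrt2)\,\mathbb{E}[\kappa]$ down to exactly $2^{-1/4}$.

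This is the paper's construction in disguise. Its operators read $G_{a_0,a_1,b|y}=A_0\otimes\tfrac{\id}{4}+A_1\otimes\tfrac{b\sigma_y}{4}$, with $A_0=\tfrac14\id+\alpha\sum_x a_x\sigma_x$ and $A_1=\beta\id+\tfrac{1}{4\sqrt2}\sum_x a_x\sigma_x$. Here $\alpha,\beta$ are the paper's constants, not your outcome labels. The identity $\beta=\sqrt2\alpha$ gives $A_0\pm A_1\ge0$ and $G_{a_0,a_1,b|y}=\sum_{\kappa=\pm}\tfrac{A_0+\kappa A_1}{4}\otimes\tfrac12(\id+\kappa b\sigma_y)$. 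That is exactly your product form with $\mathbb{E}[\kappa]=4\beta$.

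So your framework can be completed, but only with this correlated flip, or by actually solving the feasibility problem. That problem is an SDP in the paper's form \cref{eq:bi-partite_jm_no_F} but bilinear in your product form. As written, the proposal does not prove the theorem.
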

\begin{proof}
    For the tripartite case with two untrusted parties, notice that if we are able to write
    \begin{equation}\label{eq:bi-partite_jm_no_F}
        M_{a \vert x} \otimes M_{b \vert y} = \sum_{\lambda} p(a \vert x, \lambda) G_{\lambda,b \vert y} + \sum_{\mu} p(b \vert y, \mu) H_{\mu, a \vert x},
    \end{equation}
    where $G_{\lambda, b \vert y} \geq 0$, $H_{\mu, a \vert x} \geq 0$ with $\sum_{b}G_{\lambda, b \vert y} = G_\lambda$ and $\sum_a H_{\mu, a \vert x} = H_\mu$, it follows that for any state $\rho^{ABC}$, the assemblage $\sigma_{a b \vert x y}^C = \Tr_{AB}(M_{a \vert x} \otimes M_{b \vert y} \otimes \id^C \rho^{ABC})$ does not exhibit GMS, as it can be decomposed as a particular case of \cref{eq:two-sided_bisep_assemblage}.
    Therefore, the proof of \cref{thm:two_sided_incomp_but_no_gms} for the tripartite case amounts to finding a decomposition of the form of \cref{eq:bi-partite_jm_no_F} for the noisy Pauli measurements described in the statement of the theorem.
    In the End Matter, we construct explicit sets of $G_{\lambda,b \vert y}$ and $ H_{\mu, a \vert x}$ such that \cref{{eq:bi-partite_jm_no_F}} holds.
\end{proof}
Since unsteerable assemblages cannot lead to nonlocal Bell correlations, \cref{thm:two_sided_incomp_but_no_gms} also ensures that the pair of noisy Pauli measurements from \cref{thm:two_sided_incomp_but_no_gms} also cannot lead do genuine multipartite Bell nonlocality~\cite{Brunner2014BellReview}. Also, since every pair of incompatible dichotomic measurements lead to bipartite Bell nonlocality~\cite{Wolf2009IncompCHSH},  \cref{thm:two_sided_incomp_but_no_gms} provides an example of measurements which can lead to bipartite Bell nonlocality, but cannot lead to genuine multipartite Bell nonlocality.

The proof of \cref{thm:two_sided_incomp_but_no_gms} builds up on the existence of the decomposition \cref{eq:bi-partite_jm_no_F}, which resembles a generalised notion of joint measurability, applicable to local measurements in a bipartite scenario. 
In the first term on the right-hand side of \cref{eq:bi-partite_jm_no_F}, Alice's measurements outcomes come from classical post-processings of a `parent' quantum object $\{G_{\lambda, b \vert y}\}_{\lambda, b, y}$ which does not depend on her measurement setting.
Meanwhile, in the second term it is Bob's outcomes that can be obtained via classical post-processings of a quantum object $\{H_{\mu, a \vert x}\}_{\mu, a, x}$ which does not depend on his choice of measurement.

In fact, a further generalisation of of \cref{eq:bi-partite_jm_no_F} could also have been considered.
Namely, if a pair of measurements $\{A_{a \vert x}\}_{a, x}$ and $\{B_{b \vert y}\}_{b, y}$ can be decomposed as 
\begin{align}\label{eq:bipartite-jm}
    A_{a \vert x} \otimes B_{b \vert y} &= \sum_{\lambda} p(a \vert x, \lambda) G_{\lambda,b \vert y} + \sum_{\mu} p(b \vert y, \mu) H_{\mu, a \vert x} \nonumber \\
    &+ \sum_{\tau} p(ab \vert xy, \tau) F_{\tau},
\end{align}
where $G_{\lambda, b \vert y} \geq 0$, $H_{\mu, a \vert x} \geq 0$, $F_{\tau} \geq 0$, $\sum_{b}G_{\lambda, b \vert y} = G_\lambda$, $\sum_a H_{\mu, a \vert x} = H_\mu$, and $p(ab \vert xy, \tau)$ is a quantum correlation, it follows that such a pair of measurements does not enable GMS in tripartite scenarios.
That is, for any state $\rho^{ABC}$, the assemblage $\Tr_{AB}(A_{a \vert x} \otimes B_{b \vert y} \otimes \id \rho^{ABC})$ can always be written as \cref{eq:two-sided_bisep_assemblage}.

Notably, \cref{eq:bipartite-jm} characterises precisely the pairs of sets of measurements which enable GMS in two-sided device-independent scenarios, according to the following theorem, whose proof is discussed in the End Matter.
\begin{theorem}\label{thm:bi_partite_jm_iff_no_two-sided_gms}
    Two sets of measurements $\lbrace A_{a \vert x}\rbrace_{a,x} $ and $\lbrace B_{b \vert y}\rbrace_{b,y} $ can be decomposed as \cref{eq:bipartite-jm} if and only if for any state $\rho^{1 \ldots n}$ the assemblage $\Tr_{AB}(A_{a \vert x} \otimes B_{b \vert y} \otimes \id^{3 \ldots n} \rho^{1 \ldots n})$ does not exhibit genuine multipartite steering.
\end{theorem}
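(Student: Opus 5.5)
The equivalence has an easy direction (a decomposition gives no GMS for every state) and a hard one (no GMS for every state forces a decomposition). For the hard one I will use a single probe state built from maximally entangled states together with a Choi-type inversion.

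\emph{Decomposition implies no GMS.} Suppose \cref{eq:bipartite-jm} holds, and fix any $\rho^{1\ldots n}$. Applying $\Tr_{AB}[(\,\cdot\,\otimes\id^{3\ldots n})\rho]$ term by term gives three contributions.
\begin{itemize}
\item The $G$ term gives $\sum_\lambda p(a|x,\lambda)\,\Tr_{AB}[(G_{\lambda,b|y}\otimes\id)\rho]$. Here $\{G_{\lambda,b|y}\}$ acts as a measurement assemblage of Bob's that is subnormalised, with $\lambda$-dependent normalisation $G_\lambda$. Absorbing $\Tr[(G_\lambda\otimes\id)\rho]$ into the weights produces a term in which only Bob steers parties $3\ldots n$, while Alice's output comes from a local response function.
\item The $H$ term is symmetric, with only Alice steering.
\item The $F$ term gives $\sum_\tau p(ab|xy,\tau)\,\Tr_{AB}[(F_\tau\otimes\id)\rho]$. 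Since $p(ab|xy,\tau)$ is a quantum correlation and $\Tr_{AB}[(F_\tau\otimes\id)\rho]\ge0$, this has the form of the last term in \cref{eq:two-sided_bisep_assemblage}.
\end{itemize}
For $n>3$ I will check that each term matches the corresponding $n$-partite bi-separable form given in the End Matter. That form should allow the trusted parties $3\ldots n$ to be in arbitrary joint states on each branch, so this direction is bookkeeping.

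\emph{No GMS implies decomposition.} Take the state $\ket{\Phi^+}^{A A'}\otimes\ket{\Phi^+}^{B B'}$. Give $A'B'$ to party $3$ and let the remaining parties hold trivial product states; or, for $n\ge4$, give $A'$ to party $3$ and $B'$ to party $4$.
\begin{itemize}
\item By \cref{Eq: MaxEntangledAssemblage} applied to both factors, the resulting assemblage is $\sigma_{ab|xy}=A^T_{a|x}\otimes B^T_{b|y}/(d_Ad_B)$, tensored with fixed states.
\item By hypothesis it admits the $n$-partite bi-separable decomposition. Tracing out the trivial parties, or grouping all of them together, I will reduce this to the three-term tripartite form \cref{eq:two-sided_bisep_assemblage}:
\[
\frac{A^T_{a|x}\otimes B^T_{b|y}}{d_Ad_B}=\sum_\lambda p(\lambda)p(a|x\lambda)\sigma_{b|y\lambda}+\sum_\mu p(\mu)p(b|y\mu)\sigma_{a|x\mu}+\sum_\nu p(\nu)p(ab|xy\nu)\rho_\nu .
\]
\item Multiplying by $d_Ad_B$ and transposing gives \cref{eq:bipartite-jm} with $G_{\lambda,b|y}=d_Ad_Bp(\lambda)\sigma^T_{b|y\lambda}\ge0$, and $H$ and $F$ defined analogously.
\item The no-signalling conditions $\sum_b\sigma_{b|y\lambda}=\rho_\lambda$ transfer to $\sum_bG_{\lambda,b|y}=G_\lambda$, and the correlations $p(ab|xy\nu)$ remain quantum.
\end{itemize}
This mirrors \cref{lem:pos_left_inverse}. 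Since the maximally entangled state has full Schmidt rank, no inversion issue arises: transposition is a positive bijection.

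\emph{Main obstacle.} The delicate step is making sure the decomposition extracted from the $n$-partite definition really has the form \cref{eq:bipartite-jm}. If the End Matter's $n$-party bi-separable assemblage also includes bipartitions that split the trusted parties, for example $\{1,3\}|\{2,4\}$ when $A'$ and $B'$ are held by different parties, those terms must be absorbed into the $G$, $H$ or $F$ terms.
\begin{itemize}
\item The plan is to put $A'B'$ together in a single trusted party $3$.
\item Every bipartition then separates $\{1\}$ from $\{2\}$, or separates $\{1,2\}$ from party $3$, so each term lands in one of the three forms after discarding the trivial parties.
\item Terms where $1$ and $2$ are on the same side but apart from party $3$ give products of an $ab|xy$ quantum correlation with a state, which is exactly an $F$ term.
\end{itemize}
I will check carefully that the correlations arising there are quantum rather than merely no-signalling. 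This holds because they are produced by local measurements on a quantum state.
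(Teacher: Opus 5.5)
For $n=3$ your argument is the paper's. Your probe $\ket{\Phi^+}^{AA'}\otimes\ket{\Phi^+}^{BB'}$, with $A'B'$ held by party $3$, is exactly the paper's state $\frac{1}{\sqrt{d_Ad_B}}\sum_{i,j}\ket{i}_A\ket{j}_B\ket{\phi_{ij}}_C$ with $\ket{\phi_{ij}}=\ket{i}_{A'}\ket{j}_{B'}$, and transposition is the positive left inverse of \cref{lem:pos_left_inverse}. Your easy direction is also fine.

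The genuine gap is in your extension of the hard direction to $n\ge 4$. Your claim that every bipartition separates $\{1\}$ from $\{2\}$, or $\{1,2\}$ from party $3$, is false. It misses the bipartitions $\{1,2\}\cup K_1\,|\,K_2$ with $3\in K_1$ and $K_2\subseteq\{4,\ldots,n\}$, i.e.\ the $\nu$-terms of \cref{eq:general_bisep_assemblage}. With trivial product states on parties $4,\ldots,n$, your probe assemblage $\frac{1}{d_Ad_B}A^T_{a|x}\otimes B^T_{b|y}\otimes\rho^{4\ldots n}$ is itself a single such term ($K_1=\{3\}$, $K_2=\{4,\ldots,n\}$). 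It is therefore non-GMS for \emph{every} pair of measurement sets, and the hypothesis gives you no information. Conversely, discarding parties $4,\ldots,n$ from a $\nu$-term leaves an arbitrary two-sided assemblage $\sigma^{3}_{ab|xy,\nu}$, which matches none of the three terms of \cref{eq:two-sided_bisep_assemblage}. Grouping parties $3,\ldots,n$ does not help either: merging only transfers non-GMS from the merged assemblage to the unmerged one, not the other way round. Sending $A'$ to party $3$ and $B'$ to party $4$ fails in the same way, since $\frac{A^T_{a|x}}{d_A}\otimes\frac{B^T_{b|y}}{d_B}$ is a single $\omega$-term.

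If the hypothesis is read as ranging over all $n$, specialise to $n=3$ and drop the ancilla discussion. This is how the paper's proof uses it, invoking only a tripartite probe. For a fixed $n\ge4$ you need a probe in which every trusted party is entangled with $AB$, for instance $\ket{\psi}=\frac{1}{\sqrt{d_Ad_B}}\sum_{i,j}\ket{i}_A\ket{j}_B\ket{e_{ij}}^{\otimes(n-2)}$ with $\{\ket{e_{ij}}\}$ orthonormal. This state is invariant under permutations of parties $3,\ldots,n$ and has Schmidt rank $d_Ad_B$ across $12|3\ldots n$. The argument then runs as follows.
\begin{itemize}
\item As in the proof of \cref{thm:one-sided_incomp_implies_gms}, \cref{lem:psd_orthogonality_decomposition} and \cref{lem:flip_invariance} make every term of the decomposition permutation invariant.
\item \cref{lem:product_implies_full_product} then collapses each $\nu$- and $\omega$-term to the form $p(ab|xy)\,\tau^{\otimes(n-2)}$. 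Here $\tau$ is independent of $a,b,x,y$, and $p(ab|xy)$ is quantum (a product distribution in the $\omega$ case), so each such term is $F$-type.
\item Applying the positive left inverse of \cref{lem:pos_left_inverse} for the cut $12|3\ldots n$ yields \cref{eq:bipartite-jm}.
\end{itemize}
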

\cref{thm:bi_partite_jm_iff_no_two-sided_gms} shows that \cref{eq:bipartite-jm} plays an important role in analysing measurement incompatibility in multipartite scenarios. Note that, while checking if two sets of measurements can be decomposed as in \cref{eq:bipartite-jm} may be a computationally hard task,  checking if two sets of measurements can be decomposed as in \cref{eq:bi-partite_jm_no_F} can be done via standard semidefinite programming (SDP) methods~\cite{Cavalcanti2017Review}. For that it is enough to `absorb' the distributions inside the POVM elements, as done in the standard SDP approach to joint measurarability and steering~\cite{Cavalcanti2017Review}.

\textit{Discussion.}\textemdash 
In this work, we showed that the strong relation between steering in bipartite scenarios and measurement incompatibility extends only partially to multipartite scenarios, when the notion of genuine multipartite steering is taken into account.
More specifically, in multipartite scenarios with a single uncharacterised party, we proved that any set of incompatible measurements is able to generate genuine multipartite steering (\cref{thm:one-sided_incomp_implies_gms}).
However, in scenarios with multiple uncharacterised parties, it turns out that incompatibility does not necessarily imply genuine multipartite steering, as \cref{thm:two_sided_incomp_but_no_gms} provides an example of incompatible sets of measurements which do not produce genuine multipartite steering when applied to any quantum state.
This shows that steering in the multipartite case and the standard notion of incompatibility are not always in perfect correspondence.

As part of our methods, \cref{eq:bipartite-jm} presents a definition that may be interpreted as generalisation of joint measurability for spatially distributed measurements and provides a necessary and sufficient condition for sets of measurements to lead to genuine multipartite steering in scenarios with two uncharacterised parties.
While we could not provide a natural operational interpretation for \cref{eq:bipartite-jm}, we believe it may exist. 
One could also consider similar generalisations of joint measurability, for example, asking for normalisation, or for a tensor product structure for the `parent' measurements, which should find operation and practical value beyond the relationship with multipartite steering.
A deeper exploration of these definitions, their properties and operational meaning deserves future work.

\textit{Acknowledgements.}\textemdash We thank Arthur C. R. Dutra, Lina Vandré and Paul Skrzypczyk for fruitful discussions.
This work was funded by QuantEdu France, a State aid managed by the French National Research Agency for France 2030 with the reference ANR-22-CMAS-0001, the Swedish Research Council
(grant no. 2024-05341), the Wallenberg Initiative on Networks and Quantum Information (WINQ), and by the European Union under the Marie Skłodowska-Curie Actions (MSCA) through the QNETS project (grant agreement ID: 101208259). Views and opinions expressed are however those of the author(s) only and do not necessarily reflect those of the European Union or the European Education and Culture Executive Agency (EACEA). Neither the European Union nor EACEA can be held responsible for them. LT acknowledges funding from the ANR through the JCJC grant LINKS (ANR-23-CE47-0003).

\bibliography{bibliography.bib}

\clearpage

\section{End Matter}
\subsection{Definitions for \texorpdfstring{$n$}{n}-partite scenarios}
An $n$-partite quantum state $\rho$ is bi-separable if it admits a decomposition of the form $\rho = \sum_\lambda p(\lambda)\rho^{K^{(\lambda)}_1} \otimes \rho^{K^{(\lambda)}_2}$,
where, for each $\lambda$, $K^{(\lambda)}_1$ and $K^{(\lambda)}_2$ form a bipartition of $\{1. \ldots, n\}$, i.e., $K^{(\lambda)}_1$ and $K^{(\lambda)}_2$ are non-empty disjoint subsets of $\{1, \ldots, n\}$ with $K^{(\lambda)}_1 \cup K^{(\lambda)}_2 = \{1, \ldots, n\}$.
Otherwise, $\rho$ is genuine multipartite entangled.\\
\indent Following the discussion in the main text, in an $n$-partite steering scenario with $k$ untrusted parties, i.e., in a $k$-sided device-independent scenario, an assemblage $ \{\sigma^{k+1 \ldots n}_{a_1 \ldots a_k \vert x_1 \ldots x_k}\}$ does not exhibit genuine multipartite steering if it can be decomposed as
\small
\begin{align}\label{eq:general_bisep_assemblage}
    \sigma^{k+1 \ldots n}_{a_1 \ldots a_k \vert x_1 \ldots x_k} & = \sum_\lambda p(\lambda)p(a_1 \ldots a_k \vert x_1 \ldots x_k, \lambda) \rho^{k+1 \ldots n}_\lambda \nonumber \\
    & + \sum_\mu p(\mu)p(a_{L_1^{(\mu)}} \vert x_{L_1^{(\mu)}}, \mu) \sigma^{k+1 \ldots n}_{a_{L_2^{(\mu)}} \vert x_{L_2^{(\mu)}}, \mu} \nonumber \\
    & + \sum_\nu p(\nu)\sigma^{K_1^{(\nu)}}_{a_1 \ldots a_k \vert x_1 \ldots x_k, \nu} \otimes \rho^{K_2^{(\nu)}}_\nu \nonumber \\
    & + \sum_\omega p(\omega)\sigma^{K_1^{(\omega)}}_{a_{L_1^{(\omega)}} \vert x_{L_1^{(\omega)}}, \omega} \otimes \sigma^{K_2^{(\omega)}}_{a_{L_2^{(\omega)}} \vert x_{L_2^{(\omega)}}, \omega},
\end{align}
where, for each $\mu$ and $\omega$, $L^{(\cdot)}_1$ and $L^{(\cdot)}_2$ constitute a bipartition of $\{1, \ldots, k\}$, and for each $\nu$ and $\omega$, $K^{(\cdot)}_1$ and $K^{(\cdot)}_2$ form a bipartition of $\{k+1, \ldots, n\}$.
We also require all steering assemblages and the probability tables appearing in the right-hand side of \cref{eq:general_bisep_assemblage} to admit a quantum realisation.

We highlight two special features of this definition, which will be useful for some of our proofs.
First, starting from an assemblage $ \mathcal{S} := \{\sigma^{k+1 \ldots n}_{a_1 \ldots a_k \vert x_1 \ldots x_k}\}$, consider the assemblage $\mathcal{S'} := \{\sigma^{k+1 \ldots n'}_{a_1 \ldots a_k \vert x_1 \ldots x_k}\}$, where the last $n-n'$ parties are merged into a single party labelled by $n'$. %the parties $n-n', \ldots, n$ are merged into a single party $n'$.
If $\mathcal{S'}$ does not exhibit GMS, it follows that $ \mathcal{S}$ also does not exhibit GMS.

Second, starting from an assemblage in a $k$-sided device-independent scenario $\mathcal{S}:=\{\sigma^{k+1 \ldots n}_{a_1 \ldots a_k \vert x_1 \ldots x_k}\}$, we may construct an assemblage in a $k+1$-sided device-independent scenario via $\mathcal{S'} := \{\sigma^{k+2 \ldots n}_{a_1 \ldots a_{k+1} \vert x_1 \ldots x_{k+1}}\}$ by $\sigma^{k+2 \ldots n}_{a_1 \ldots a_{k+1} \vert x_1 \ldots x_{k+1}} = \Tr_{k+1}(M_{a_{k+1} \vert x_{k+1}} \otimes \id^{k + 2 \ldots n} \sigma^{k+1 \ldots n}_{a_1 \ldots a_{k} \vert x_1 \ldots x_{k}})$.
If $ \mathcal{S}$ does not exhibit GMS, it follows that $\mathcal{S'}$ also does not exhibit GMS.

\subsection{Useful lemmas}
In the proof of our main results, we use the following lemmas.
\begin{lemma}\label{lem:psd_orthogonality_decomposition}
    Let $A$ be a positive semidefinite operator that can be decomposed as $A = \sum_{k = 1}^N B_k$, where each $B_k$ is also positive semidefinite.
    If $A\ket{\psi} = 0$, then $B_k\ket{\psi} = 0$ for all $k \in \{1, \ldots, N\}$.
\end{lemma}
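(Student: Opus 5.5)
The plan is to compute the expectation value of $A$ in $\ket{\psi}$ and use that a sum of nonnegative numbers can vanish only if every term vanishes. Since $A\ket{\psi}=0$, we have
\begin{equation*}
0=\bra{\psi}A\ket{\psi}=\sum_{k=1}^N \bra{\psi}B_k\ket{\psi}.
\end{equation*}
Each $B_k$ is positive semidefinite, so every summand $\bra{\psi}B_k\ket{\psi}$ is a nonnegative real number. Because the summands add up to zero, each must vanish: $\bra{\psi}B_k\ket{\psi}=0$ for every $k$.

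Next we pass from a vanishing expectation value to a vanishing vector. Since $B_k\geq 0$, it has a positive semidefinite square root $B_k^{1/2}$. Then
\begin{equation*}
\bra{\psi}B_k\ket{\psi}=\bigl\| B_k^{1/2}\ket{\psi}\bigr\|^2 = 0,
\end{equation*}
so $B_k^{1/2}\ket{\psi}=0$. Applying $B_k^{1/2}$ once more gives $B_k\ket{\psi}=B_k^{1/2}B_k^{1/2}\ket{\psi}=0$, which is the claim.

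There is no real obstacle here. The only step that needs care is the passage from $\bra{\psi}B_k\ket{\psi}=0$ to $B_k\ket{\psi}=0$, which is exactly where positivity is used. The square-root argument handles it in finite dimensions. Alternatively, one can apply the Cauchy--Schwarz inequality for the positive semidefinite sesquilinear form $(u,v)\mapsto\bra{u}B_k\ket{v}$ to get $|\bra{\phi}B_k\ket{\psi}|^2\leq \bra{\phi}B_k\ket{\phi}\bra{\psi}B_k\ket{\psi}=0$ for every $\ket{\phi}$, which again forces $B_k\ket{\psi}=0$.
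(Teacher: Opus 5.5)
Your proof is correct and is essentially the paper's argument: you show that $\bra{\psi}B_k\ket{\psi}=0$ because the terms are nonnegative and sum to $\bra{\psi}A\ket{\psi}=0$, then use $\bigl\|\sqrt{B_k}\ket{\psi}\bigr\|^2=0$ to conclude $B_k\ket{\psi}=\sqrt{B_k}\sqrt{B_k}\ket{\psi}=0$. Your Cauchy--Schwarz variant is an equally valid alternative for the last step.
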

\begin{proof}
    If $A\ket{\psi} = 0$, then $\bra{\psi}A\ket{\psi} = 0$ and consequently $\bra{\psi}B_k\ket{\psi} = 0$ for every $k$.
    Defining $\ket{\phi_k} = \sqrt{B_k}\ket{\psi}$, the above relation implies that $\ket{\phi_k} = 0$.
    Therefore, $B_k \ket{\psi} = \sqrt{B_k}\ket{\phi_k} = 0$. 
\end{proof}

\begin{lemma}\label{lem:flip_invariance}
    Let $A$ be a positive semidefinite operator that can be decomposed as $A = \sum_{k = 1}^N B_k$, where each $B_k$ is also positive semidefinite, and let  $U$ be a unitary operator.
    If $AU = A$, then $B_k U = B_k$ for all $k \in \{1, \ldots, N\}$.
\end{lemma}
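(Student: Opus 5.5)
The plan is to reduce the statement to \cref{lem:psd_orthogonality_decomposition} by rewriting the invariance $AU = A$ as a kernel condition. Since $U$ is unitary, $AU = A$ is equivalent to $A(U - \id) = 0$. That in turn says $A$ annihilates every vector in the range of $U - \id$, i.e.\ every vector of the form $(U-\id)\ket{\phi}$. So, for an arbitrary $\ket{\phi}$, set $\ket{\psi} := (U - \id)\ket{\phi}$; the hypothesis then gives $A\ket{\psi} = AU\ket{\phi} - A\ket{\phi} = 0$.

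Next I would apply \cref{lem:psd_orthogonality_decomposition} to this $\ket{\psi}$. Because $A = \sum_k B_k$ with every $B_k \geq 0$, it yields $B_k\ket{\psi} = 0$ for each $k$, that is, $B_k U\ket{\phi} = B_k\ket{\phi}$. Since $\ket{\phi}$ was arbitrary, this gives the operator identity $B_k U = B_k$ for all $k \in \{1, \ldots, N\}$, which is the claim.

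There is essentially no obstacle. The only point needing care is that \cref{lem:psd_orthogonality_decomposition} is stated for a single vector, so it must be applied separately to each vector in the range of $U - \id$, and this range is enough to recover the full operator identity. The argument does not use unitarity of $U$ beyond $U$ being a linear operator. Hermiticity of $B_k$ also immediately gives the companion statement $U^\dagger B_k = B_k$, which is presumably what the proof of \cref{thm:one-sided_incomp_implies_gms} uses to conclude that each term is permutation invariant from both sides.
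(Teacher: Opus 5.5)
Your proof is correct and follows the paper's own argument: both apply \cref{lem:psd_orthogonality_decomposition} to vectors of the form $(U-\id)\ket{\phi}$, which $A$ annihilates, and then use that $\ket{\phi}$ is arbitrary to get $B_k U = B_k$. Your side remarks are also right: unitarity of $U$ is never used, and taking adjoints of $B_k U = B_k$ gives $U^\dagger B_k = B_k$, which is what the proof of \cref{thm:one-sided_incomp_implies_gms} needs.
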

\begin{proof}
    Since $A U = A$, for any vector $\ket{\psi}$ it holds that ${A(\ket{\psi}-U\ket{\psi})=0}$.
    Thus, using \cref{lem:psd_orthogonality_decomposition} we conclude that $B_k (\ket{\psi}-U\ket{\psi})= 0$, which implies $B_k U = B_k$.
\end{proof}

\begin{lemma}\label{lem:pos_left_inverse}
    Let $\ket{\psi^{(1\ldots n)}} \in \mathbb{C}^{d_1 \ldots d_n}$ be an $n$-partite pure state with Schmidt rank $d_1$ in the bipartition $1 \vert 2\ldots n$.
    Then, the map $\Gamma_{\psi}: L(\mathbb{C}^{d_1}) \rightarrow L(\mathbb{C}^{d_2 \ldots d_n})$ defined by $\Gamma_{\psi}(M) = \Tr_1 (M \otimes \id^{2 \ldots n} \ketbra{\psi^{(1 \ldots n)}})$ has a positive left inverse.
\end{lemma}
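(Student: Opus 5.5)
We write the Schmidt decomposition explicitly and compute $\Gamma_\psi$. Let
\[
\ket{\psi} = \sum_{i=0}^{d_1-1} \sqrt{p_i}\, \ket{\phi_i} \otimes \ket{\zeta_i},
\]
where all $p_i>0$, the $\ket{\phi_i}$ form an orthonormal basis of $\mathbb{C}^{d_1}$, and the $\ket{\zeta_i}$ are orthonormal in $\mathbb{C}^{d_2\cdots d_n}$. Then
\[
\Gamma_\psi(M) = \sum_{i,j}\sqrt{p_ip_j}\,\bra{\phi_j}M\ket{\phi_i}\,\ket{\zeta_i}\!\bra{\zeta_j}.
\]
Set $D=\sum_i \sqrt{p_i}\ket{\zeta_i}\!\bra{\phi_i}$, an injective map from $\mathbb{C}^{d_1}$ into $\mathbb{C}^{d_2\cdots d_n}$. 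With the transpose $T$ taken in the basis $\{\ket{\phi_i}\}$, this gives $\Gamma_\psi(M)=D\,M^{T}D^\dagger$. Thus $\Gamma_\psi$ is a transpose followed by a congruence with an injective operator. This is the same structure as in \cref{Eq: MaxEntangledAssemblage}, with $1/\sqrt{d}$ replaced by $\sqrt{p_i}$ and the $\ket{ii}$ basis replaced by the Schmidt vectors.

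Next we build the inverse. Because all $p_i>0$, $D$ has full column rank $d_1$, and its pseudo-left-inverse is
\[
E=\sum_i p_i^{-1/2}\ket{\phi_i}\!\bra{\zeta_i},
\qquad E D=\id^{(1)}.
\]
Define
\[
\Lambda(N) := \big(E\,N\,E^\dagger\big)^{T}
\]
for $N\in L(\mathbb{C}^{d_2\cdots d_n})$. Then $\Lambda(\Gamma_\psi(M)) = (E D M^T D^\dagger E^\dagger)^T = M$, so $\Lambda$ is a left inverse.

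For positivity, $N\ge0$ implies $ENE^\dagger\ge0$, since this is a congruence. The transpose preserves the spectrum, hence also positivity. So $\Lambda$ is a positive map. It is not completely positive, but only positivity is claimed.

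The one point needing care is the order of the transpose and the congruence. The transpose has to be taken in the Schmidt basis $\{\ket{\phi_i}\}$ so that $\Gamma_\psi(M)=DM^TD^\dagger$ holds exactly. This is settled by a direct entrywise check.

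Finally, the later use in \cref{thm:one-sided_incomp_implies_gms} goes as follows. Apply $\Lambda$ to an unsteerable decomposition $\sum_\lambda p(\lambda)p(a|x\lambda)\rho_\lambda$. Setting $G_\lambda=p(\lambda)\Lambda(\rho_\lambda)\ge0$ gives $A_{a|x}=\sum_\lambda p(a|x\lambda)G_\lambda$. Normalisation is automatic because $\Lambda(\Gamma_\psi(\id))=\id$.
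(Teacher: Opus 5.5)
Your proof is correct and is essentially the paper's argument. Your $E$ is exactly the paper's operator $B=\sum_i p_i^{-1/2}\ketbra{\phi_i}{\zeta_i}$, and your $\Lambda(N)=(ENE^\dagger)^T$ (transpose in the Schmidt basis of party $1$) is the same map as the paper's $\Lambda_\psi(N)=BN^TB^\dagger$, with the transpose placed on the output side rather than the input side, so positivity again follows from congruence combined with transposition.
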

\begin{proof}
    Let $\ket{\psi^{(1 \ldots n)}} = \sum_{i = 0}^{d_1 - 1} \sqrt{p_i}\ket{\phi_i^{(1)}} \otimes \ket{\zeta^{(2 \ldots n)}_i}$ be the Schmidt decomposition of the state $\ket{\psi^{(1 \ldots n)}}$ in the bipartition $1 \vert 2 \ldots n$.
    Since it has Schmidt rank $d_1$, we have $p_i > 0$ for all $i \in \{0, \ldots, d_1 -1\}$.
    Then, it is straightforward to check that the map $\Lambda_\psi: L(\mathbb{C}^{d_2 \ldots d_n}) \rightarrow
     L(\mathbb{C}^{d_1})$ defined by $\Lambda_\psi(N) = \sum_{i, j = 0}^{d_1 - 1} \frac{1}{\sqrt{p_i p_j}} \bra{\zeta_i^{(2 \ldots n)}} N \ket{\zeta_j^{(2 \ldots n)}}\ketbra{\phi_j^{(1)}}{\phi_i^{(1)}}$ is a left inverse for $\Gamma_\psi$.
     Moreover, note that we can write $\Lambda_\psi (N) = B N^T B^\dagger$, where $B = \sum_{i = 0}^{d_1 -1}\frac{1}{\sqrt{p_i}} \ketbra{\phi_i^{(1)}}{\zeta_i^{(2 \ldots n)}}$, which implies that $\Lambda_\psi$ is positive.
\end{proof}

The following lemma may be obtained as a particular case of a more general result proven in the appendix of \cite{Ichikawa2008SymmEnt}. For completeness, we present here a proof tailored for our case of interest.
\begin{lemma}\label{lem:product_implies_full_product}
    Let $K_1, K_2 \subset \{1, \ldots, n\}$ be a bipartition of $n$ parties, and let $\rho^{1 \ldots n} \in \mathbb{C}^{d^{\otimes n}}$ be an $n$-partite state which factorises in such a bipartition, i.e., $\rho^{1 \ldots n} = \rho^{K_1} \otimes \rho^{K_2}$.
    If $\rho^{1 \ldots n}$ is invariant under the permutation of any two parties, then $\rho = \sigma^{\otimes n}$, where $\sigma = \Tr_{1 \ldots i-1, i+1, \ldots n} (\rho^{1 \ldots n})$ for any $i \in \{1 \ldots n\}$.
\end{lemma}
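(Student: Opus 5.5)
Write $\rho=\rho^{K_1}\otimes\rho^{K_2}$ with both parts non-empty, and pick parties $i\in K_1$, $j\in K_2$. The plan is to use the swap of $i$ and $j$ to move one party across the cut, so that it becomes uncorrelated with everything else. Iterating this peels off single-party marginals one at a time until $\rho$ is fully product. Permutation invariance then forces every single-party factor to equal the same $\sigma$.

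\textbf{Step 1 (one party is uncorrelated with the rest).} Let $W=K_1\setminus\{i\}$. We have $\rho^{K_1}=\Tr_{K_2}\rho$, and $\rho$ is invariant under the swap of $i$ and $j$. Taking the partial trace over $(K_2\setminus\{j\})\cup\{i\}$ and applying that swap gives
\[
\Tr_{(K_2\setminus\{j\})\cup\{i\}}\rho \;=\; \rho^{W\cup\{i\}}
\]
with $i$ relabelled to $j$. Now compute the left-hand side from the product form. Tracing out $K_2\setminus\{j\}$ leaves $\rho^{K_1}\otimes\rho^{\{j\}}$, and tracing out $i$ then gives $\rho^{W}\otimes\rho^{\{j\}}$. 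Hence $\rho^{K_1}$, viewed on parties $W\cup\{i\}$, equals $\rho^{W}\otimes\rho^{\{i\}}$, where $\rho^{\{i\}}$ is the single-party marginal $\sigma$. This uses only that single-party marginals coincide under relabelling, which follows directly from permutation invariance. So party $i$ factors off from $K_1$. The symmetric argument shows that any party of $K_2$ factors off from $K_2$ as well.

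\textbf{Step 2 (iteration).} We now have $\rho=\sigma\otimes\rho^{K_1\setminus\{i\}}\otimes\rho^{K_2}$. This state is again a product across the bipartition $\{i\}\cup K_2 \,|\, K_1\setminus\{i\}$, and it is still permutation invariant, so Step 1 applies again. Inducting on the number of parties not yet factored off, we obtain $\rho=\bigotimes_k \rho^{\{k\}}$. Every marginal $\rho^{\{k\}}$ equals $\sigma$ by invariance under transpositions, which gives $\rho=\sigma^{\otimes n}$.

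\textbf{Main obstacle.} The delicate part is Step 1: keeping track of which tensor slots hold which parties when we trace out and then apply the swap, since a sloppy relabelling could produce a circular argument. To avoid this, I would write the key identity explicitly on product operators and extend by linearity. Concretely, for arbitrary $X$ on $W\cup\{i\}$, compute $\Tr[(X\otimes \id)\rho]$ in two ways: once directly from the product form, and once after conjugating by the swap of $i$ and $j$. Comparing the two expressions yields the factorisation.
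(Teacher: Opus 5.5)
Your proof is correct and follows essentially the same route as the paper. The paper also swaps one party of $K_1$ with one of $K_2$ (parties $k$ and $k+1$ after relabelling), compares the two resulting product forms to get $\rho^{K_1}=\rho^{K_1\setminus\{k\}}\otimes\rho^{k}$, and then appeals to permutation symmetry to conclude $\rho=\sigma^{\otimes n}$. Your induction on the parties not yet split off just makes that terse final ``by symmetry'' step explicit; it is cleanest to apply Step~1 to the bipartition between already-split and remaining parties, so that the case $|K_1|=1$ needs no separate handling.
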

\begin{proof}
    Since $\rho^{1 \ldots n}$ is invariant under the permutation of any two parties, we can assume that $\rho^{1 \ldots n} = \rho^{1 \ldots k} \otimes \rho^{k + 1 \ldots n}$ for $k = \vert K_1 \vert$.
    Or, equivalently, $\rho^{1 \ldots n} = \Tr_{k+1 \ldots n}(\rho^{1 \ldots n}) \otimes \Tr_{1 \ldots k}(\rho^{1 \ldots n})$.
    Using the invariance of $\rho^{1 \ldots n}$ under the exchange of parties $k$ and $k+1$, we can also write $\rho^{1 \ldots n} = \rho^{1 \ldots k-1, k + 1} \otimes \rho^{k, k+2 \ldots n} = \Tr_{k, k+2 \ldots n}(\rho^{1 \ldots n}) \otimes \Tr_{1 \ldots k-1, k+1}(\rho^{1 \ldots n})$.
    Therefore, $\rho^{1 \ldots n} = \Tr_{k \ldots n}(\rho^{1 \ldots n}) \otimes \Tr_{1 \ldots k-1, k+1, \ldots n}(\rho^{1 \ldots n}) \otimes \Tr_{1 \ldots k}(\rho^{1 \ldots n}) = \rho^{1 \ldots k-1} \otimes \rho^k \otimes \rho^{k+1 \ldots n}$, and the party $k$ is uncorrelated from the others.
    But since the state is symmetric under any permutation of parties, it follows that $\rho = \sigma^{\otimes n}$.
\end{proof}

\subsection{Proof of \cref{thm:one-sided_incomp_implies_gms}}
\indent In the case of one-sided device-independent scenarios, i.e., scenarios with a single untrusted party, the definition \cref{eq:general_bisep_assemblage} reads
\begin{equation}\label{eq:general_one-sided_bisep_assemblage}
    \sigma^{2\ldots n}_{a \vert x} = \sum_\lambda p(\lambda)p(a \vert x, \lambda) \rho^{2 \ldots n}_\lambda + \sum_\mu p(\mu) \sigma^{K^{(\mu)}_1}_{a \vert x \mu} \otimes \rho^{K^{(\mu)}_2}_\mu,
\end{equation}
where, for each $\mu$, $K^{(\mu)}_1$ and $K^{(\mu)}_2$ form a bipartition of $\{2, \ldots, n\}$.
With such definition, the proof of \cref{thm:one-sided_incomp_implies_gms} for an arbitrary number of parties is analogous to the one discussed in the main text for the tripartite case.

Assume that $\sigma_{a \vert x}^{2 \ldots n} = \Tr_1(A_{a|x}\otimes \id^{2 \ldots n} \ketbra{\psi^{(1 \ldots n)}})$ admits a decomposition as in \cref{eq:general_one-sided_bisep_assemblage}.
Since $\ket{\psi^{(1 \ldots n)}}$ is invariant under any permutation of parties $2, \ldots, n$, it follows that $\sigma_{a \vert x}^{2 \ldots n} V_\pi = \sigma_{a \vert x}^{2 \ldots n}$, where $V_\pi:{\mathbb{C}^d}^{n-1}\to {\mathbb{C}^d}^{n-1}$ is the unitary that performs the $\pi\in S_{n-1}$ permutation on the parties $2 \ldots n$.
\cref{lem:psd_orthogonality_decomposition} and \cref{lem:flip_invariance} can then be used to conclude that $\sigma^{K^{(\mu)}_1}_{a \vert x \mu} \otimes \rho^{K^{(\mu)}_2}_\mu V_\pi = \sigma^{K^{(\mu)}_1}_{a \vert x \mu} \otimes \rho^{K^{(\mu)}_2}_\mu$ for every $\mu$ and any permutation $V_\pi$. Since $\sigma^{K^{(\mu)}_1}_{a \vert x \mu} \otimes \rho^{K^{(\mu)}_2}_\mu={\sigma^{K^{(\mu)}_1}_{a \vert x \mu} \otimes \rho^{K^{(\mu)}_2}_\mu}^\dagger$, we must then have $V_\pi^\dagger \sigma^{K^{(\mu)}_1}_{a \vert x \mu} \otimes \rho^{K^{(\mu)}_2}_\mu V_\pi= \sigma^{K^{(\mu)}_1}_{a \vert x \mu} \otimes \rho^{K^{(\mu)}_2}_\mu$.
That is, the operator $\sigma^{K^{(\mu)}_1}_{a \vert x \mu} \otimes \rho^{K^{(\mu)}_2}$ is invariant under any permutation of the parties $2 \ldots n$.
Now, \cref{lem:product_implies_full_product} leads to $\sigma^{K^{(\mu)}_1}_{a \vert x \mu} \otimes \rho^{K^{(\mu)}_2} = \Tr(\sigma^{K^{(\mu)}_1}_{a \vert x \mu}) \rho_\mu^{\otimes (n-1)}$
This allows us to write $\sigma_{a \vert x}^{2 \ldots n} = \sum_{\lambda'}p(\lambda')p(a \vert x \lambda') \rho_{\lambda'}^{2 \ldots n}$, and using \cref{lem:pos_left_inverse} we observe that the measurement $\mathcal{G} = \{G_{\lambda'}\}$ defined by $G_{\lambda'} = \Lambda_\psi(p(\lambda') \rho_{\lambda'}^{2 \ldots n})$ is a parent measurement for $\{A_{a \vert x}\}_{a,x}$, where $\Lambda_{\psi}$ is a positive left inverse of the map $\Gamma_{\psi}(M) = \Tr_1 (M \otimes \id^{2 \ldots n} \ketbra{\psi^{(1 \ldots n)}})$. \qed

\subsection{Proof of \cref{thm:two_sided_incomp_but_no_gms}}
As discussed in the main text, to prove \cref{thm:two_sided_incomp_but_no_gms} for the tripartite case, we just need to come up with positive semidefinite operators $\{G_{\lambda, b \vert y}\}_{\lambda, b, y}$ and $\{H_{\mu, a \vert x}\}_{\mu, a, x}$ satisfying $\sum_{b}G_{\lambda, b \vert y} = G_\lambda$, $\sum_a H_{\mu, a \vert x} = H_\mu$ such that the measurements $M_{\pm \vert x} = \frac{1}{2}\left(\id \pm \frac{1}{\sqrt[4]{2}}\sigma_x\right)$ can be decomposed as in \cref{eq:bi-partite_jm_no_F}.
One can check that if
\begin{align}
    G_{a_0, a_1,b \vert y} & = \left(\frac{1}{4}\id + \alpha\sum_{x = 0}^1 a_x \sigma_x\right) \otimes \frac{\id}{4} \nonumber\\
    & + \left( \beta \id + \frac{1}{4\sqrt{2}} \sum_{x = 0}^1 a_x \sigma_x\right) \otimes \frac{b\sigma_y}{4}
\end{align}
and
\begin{align}
    H_{b_0, b_1, a \vert x} & = \frac{\id}{4} \otimes \left(\frac{1}{4}\id  + \alpha \sum_{y = 0}^1 b_y \sigma_y\right) \nonumber \\
    & + \frac{a\sigma_x}{4}  \otimes \left(\beta \id  + \frac{1}{4\sqrt{2}} \sum_{y = 0}^1 b_y \sigma_y\right),
\end{align}
where $\alpha = \frac{\sqrt{2}-1}{2\sqrt[4]{2}}$ and $\beta = \frac{1}{2\sqrt[4]{2}} - \alpha $, we have $G_{a_0, a_1, b \vert y} \geq 0$, $H_{b_0, b_1, a \vert x} \geq 0$, $\sum_{b}G_{a_0, a_1, b \vert y} = G_{a_0, a_1}$, $\sum_a H_{b_0, b_1, a \vert x} = H_{b_0, b_1}$ and it also holds that
\begin{equation}
    M_{a \vert x} \otimes M_{b \vert y} = \sum_{a_{x \oplus 1} = \pm 1} G_{a_0, a_1, b \vert y} + \sum_{b_{y \oplus 1} = \pm 1} H_{b_0, b_1, a \vert x}.
\end{equation}
This verification may be done with symbolic computational methods.

Therefore, for any tripartite state $\rho^{ABC}$, the assemblage $\Tr_{AB}(M_{a \vert x} \otimes M_{b \vert y} \otimes \id^C \rho^{ABC})$ does not exhibit genuine multipartite steering.
In fact, if we take any $n$-partite state $\rho^{1 \ldots n}$, the assemblage $\Tr_{1,2}(M_{a \vert x} \otimes M_{b \vert y} \otimes \id^{3 \ldots n} \rho^{1 \ldots n})$ does not exhibit GMS, since if the parties $3, \ldots, n$ are merged into a single party $C$, the above reasoning implies that the assemblage in the corresponding tripartite scenario does not exhibit GMS.
Finally, to conclude the proof of \cref{thm:two_sided_incomp_but_no_gms}, it suffices to notice that performing extra measurements on an assemblage which does not demonstrate GMS also leads to non-GMS assemblages, as discussed in the paragraphs following \cref{eq:general_bisep_assemblage}.
\qed

\subsection{Proof of \cref{thm:bi_partite_jm_iff_no_two-sided_gms}}
First, assume that $\{A_{a \vert x}\}_{a,x}$ and $\{B_{b \vert y}\}_{b,y}$ can be written as \cref{eq:bipartite-jm}.
Then, for any tripartite state $\ket{\psi} \in \mathbb{C}^{d_A} \otimes \mathbb{C}^{d_B} \otimes \mathbb{C}^{d_C}$, it follows that $\Tr_{AB}(A_{a \vert x} \otimes B_{b \vert y} \otimes \id^{C} \ketbra{\psi})$ does not exhibit GMS.
Therefore, this conclusion also holds for any $n$-partite state $\ket{\psi} \in \mathbb{C}^{d_A} \otimes \mathbb{C}^{d_B} \otimes \mathbb{C}^{d_3} \otimes \ldots \otimes \mathbb{C}^{d_n}$, since if we merge the parties $3, \ldots, n$ into a single party $C$, the resulting assemblage is not GMS.

To prove the other direction, let us now assume that for any $n$-partite state $\ket{\psi} \in \mathbb{C}^{d_A} \otimes \mathbb{C}^{d_B} \otimes \mathbb{C}^{d_3} \otimes \ldots \otimes \mathbb{C}^{d_n}$, the assemblage $\Tr_{AB}(A_{a \vert x} \otimes B_{b \vert y} \otimes \id^{3 \ldots n} \ketbra{\psi})$ does not exhibit GMS.
In particular, notice that if we consider a tripartite state $\ket{\psi} = \frac{1}{\sqrt{d_A d_B}}\sum_{i = 0}^{d_A - 1} \sum_{j = 0}^{d_B - 1} \ket{i}_A \otimes \ket{j}_B \otimes \ket{\phi_{ij}}_C$, with $\braket{\phi_{ij}}{\phi_{i' j'}} = \delta_{i i'} \delta_{j j'}$, we can use \cref{lem:pos_left_inverse} together with \cref{eq:two-sided_bisep_assemblage} to construct a decomposition of $\{A_{a \vert x}\}_{a,x}$ and $\{B_{b \vert y}\}_{b,y}$ as \cref{eq:bipartite-jm}. 
\qed

\begin{acronym}[CGLMP]\itemsep 1\baselineskip
\acro{AGF}{average gate fidelity}
\acro{AMA}{associated measurement assemblage}

\acro{BOG}{binned outcome generation}

\acro{CGLMP}{Collins-Gisin-Linden-Massar-Popescu}
\acro{CHSH}{Clauser-Horne-Shimony-Holt}
\acro{CP}{completely positive}
\acro{CPT}{completely positive and trace preserving}
\acro{CPTP}{completely positive and trace preserving}
\acro{CS}{compressed sensing} 

\acro{DFE}{direct fidelity estimation} 
\acro{DM}{dark matter}

\acro{GST}{gate set tomography}
\acro{GPT}{general probabilistic theory}
\acroplural{GPT}[GPTs]{general probabilistic theories}
\acro{GUE}{Gaussian unitary ensemble}

\acro{HOG}{heavy outcome generation}

\acro{JM}{jointly measurable}

\acro{LHS}{local hidden-state model}
\acro{LHV}{local hidden-variable model}
\acro{LOCC}{local operations and classical communication}

\acro{MBL}{many-body localization}
\acro{ML}{machine learning}
\acro{MLE}{maximum likelihood estimation}
\acro{MPO}{matrix product operator}
\acro{MPS}{matrix product state}
\acro{MUB}{mutually unbiased bases} 
\acro{MW}{micro wave}

\acro{NISQ}{noisy and intermediate scale quantum}

\acro{POVM}{positive operator valued measure}
\acro{PR}{Popescu-Rohrlich}
\acro{PVM}{projector-valued measure}

\acro{QAOA}{quantum approximate optimization algorithm}
\acro{QML}{quantum machine learning}
\acro{QMT}{measurement tomography}
\acro{QPT}{quantum process tomography}
\acro{QRT}{quantum resource theory}
\acroplural{QRT}[QRTs]{Quantum resource theories}

\acro{RDM}{reduced density matrix}

\acro{SDP}{semidefinite program}
\acro{SFE}{shadow fidelity estimation}
\acro{SIC}{symmetric, informationally complete}
\acro{SM}{Supplemental Material}
\acro{SPAM}{state preparation and measurement}

\acro{RB}{randomized benchmarking}
\acro{rf}{radio frequency}

\acro{TT}{tensor train}
\acro{TV}{total variation}

\acro{UI}{uninformative}

\acro{VQA}{variational quantum algorithm}

\acro{VQE}{variational quantum eigensolver}

\acro{WMA}{weighted measurement assemblage}

\acro{XEB}{cross-entropy benchmarking}

\end{acronym}

\end{document}